\newtheorem{Theorem}{Theorem}[section]
\newtheorem{lem}[Theorem]{Lemma}
\newtheorem{Remark}[Theorem]{Remark}
\newtheorem{Definition}[Theorem]{Definition}
\newtheorem{Example}[Theorem]{Example}
\numberwithin{equation}{section}
\begin{document}
\title{{\LARGE Decoding Algorithms for Twisted GRS Codes}}

\author{Guanghui~Zhang$^1$, Liren Lin$^2$, Bocong Chen$^3$\footnote{E-mail addresses: {\it zghui@squ.edu.cn (G. Zhang); l\underline{~}r\underline{~}lin86@163.com (L. Lin);
bocongchen@foxmail.com (B. Chen)}}
}

\date{\small
$1.$ School of Mathematics and Physics,
Suqian University,
Suqian, Jiangsu 223800, China\\
$2.$ Heibei Key Laboratory of Applied Mathematics, Faculty of Mathematics and Statistics,
Hubei University,
Wuhan, Hubei 430062, China\\
$3.$ School of Mathematics, South China University of Technology, Guangzhou 510641, China\\
}


\maketitle

\begin{abstract}
Twisted generalized Reed-Solomon (TGRS) codes
were introduced to extend the algebraic capabilities
of classical generalized Reed-Solomon (GRS) codes.
This extension holds the potential for constructing
new non-GRS maximum distance separable (MDS) codes
and enhancing cryptographic security.
It is known that TGRS codes with $1$ twist can either be MDS or near-MDS.
In this paper, we employ the Gaussian elimination method
to propose new decoding algorithms for
MDS  TGRS codes with parameters $[n,k,n-k+1]$.
The algorithms can correct
up to  $\lfloor \frac{n-k}{2}\rfloor$ errors when $n-k$ is odd,
 and
$\lfloor \frac{n-k}{2}\rfloor-1$ errors
 when $n-k$ is even.
The computational complexity for both scenarios is $O(n^3)$. 
Our approach diverges from existing methods based on Euclidean
algorithm and addresses situations that have not been considered in the existing literature \cite{SYJL}. Furthermore, this method is also applicable to decoding near-MDS TGRS codes with parameters $[n, k, n-k]$,
enabling correction of up to $\lfloor \frac{n-k-1}{2} \rfloor$ errors, while maintaining polynomial time complexity in $n$.

\medskip
\textbf{MSC:} 94B05; 94B65.

\textbf{Keywords:} Twisted generalized Reed-Solomon code, MDS code, NMDS code, decoding algorithm.

\end{abstract}

\section{Introduction}
The process of decoding--determining which codeword
(and thus which message $\mathbf{c}$) was sent when a vector
$\mathbf{y}$ is received--is complex.
It is not merely the final step in a communication or storage system,
but is where the entire purpose of coding is realized.
Therefore, finding efficient (fast) decoding algorithms is a major area of research in coding theory due to their practical applications,
see \cite{Ball, CXY, DB, LAH, LTX2021, LTX2023, LXY, TH, Wu, YLHL} and references therein.
Coding theory designs codes that add controlled redundancy to detect and/or correct errors caused by noisy channels,
defective memory cells, or malicious interference.

The error correction capability of a code is closely
related to its minimum Hamming distance.
If $d$  is the minimum Hamming distance of a code
(whether linear or nonlinear), then the code can correct up to $\lfloor\frac{d-1}{2}\rfloor$ errors.
Here, as usual, for a real number  $x\in \mathbb{R}$,
$\lceil x\rceil$ denotes the ceiling function,
which outputs the smallest integer greater than or equal to $x$.
Additionally,
 $\lfloor x\rfloor$ represents the floor function,
 outputting the largest integer less than or equal to $x$.

Let $\mathbb{F}_q$ be a finite field with $q$ elements.
A linear code $\mathcal{C}$ over $\mathbb{F}_q$ with
parameters $[n,k,d]$ is called a \emph{maximum distance
separable (MDS) code} if it meets the Singleton bound, i.e.,
$d=n-k+1$.
MDS codes guarantee the maximum possible minimum distance
for a given $[n, k]$ linear code,
enabling recovery of the original $k$ symbols from any $n-k$ erasures
or $\lfloor \frac{n-k}{2}\rfloor$ errors.
Hence, efficient decoders facilitate system designers
in achieving Shannon-type reliability bounds
without unnecessary redundancy.
A linear code $\mathcal{C}$ over $\mathbb{F}_q$ with parameters $[n,k,d]$
 is termed an  \emph{almost MDS (AMDS)} code if $d=n-k$.
Furthermore, a
linear code  $\mathcal{C}$  over $\mathbb{F}_q$
is called a \emph{near MDS (NMDS)} code if both $\mathcal{C}$
and the dual code of $\mathcal{C}$ are AMDS codes.
NMDS codes sacrifice at most one symbol of
minimum distance compared with their MDS counterparts.
In practice this means the system tolerates either one additional erasure or
two additional errors per codeword compared with an MDS code of the same length and rate.
It is often an acceptable price when decoding speed is at a premium.

Twisted generalized Reed-Solomon (TGRS) codes
are a significant extension of classical Reed-Solomon (RS)
and generalized Reed-Solomon (GRS) codes \cite{BPN}.
They incorporate additional algebraic structures to
enhance flexibility.
These codes find applications in areas such as deep-space communication, data storage (e.g., SSDs, CDs), and cryptographic
systems where reliable data recovery is crucial \cite{HLL, ZLL, ZWC}.
Because GRS codes possess an explicit algebraic structure, they are inherently vulnerable to Sidelnikov-Shestakov attacks. Introducing a twist breaks this structure and thus masks the code¡¯s origin, while still preserving the advantageous error-correcting properties of the underlying algebraic construction.
Recently, TGRS codes and their subfield subcodes, such as twisted Goppa codes, have garnered significant attention due to their promising applications in coding theory and post-quantum cryptography \cite{BBP, GZ, HY, LL, LLO, SYLH, SZS, ZZT}. They generalize classical RS  and Goppa codes by introducing a ``twist" term in the polynomial evaluation structure, offering richer algebraic properties and enhanced structural flexibility.
It is known that TGRS codes with $1$ twist can either be MDS or near-MDS.

In this paper, we focus on  TGRS codes
and present   decoding algorithms for this class of codes.
Effective methods for decoding GRS codes have been
discussed in \cite{Roth, SKHN},
including the Peterson-Gorenstein-Zierler Decoding Algorithm, Berlekamp-Massey Decoding Algorithm, and Sugiyama Decoding Algorithm.
Notably, the Sugiyama Decoding Algorithm employs the Euclidean Algorithm for polynomials in a distinctive yet powerful manner.
Recently, Sun et al. \cite{SYJL}  proposed new decoding algorithms using the extended Euclidean algorithm for two classes of MDS TGRS codes with parameters $[n,k,n-k+1]$  when $n-k$ is even.
Their decoding algorithms can correct
 up to $\lfloor \frac{n-k}{2}\rfloor$ errors with time complexity
 of  $O(qn)$,
 and they are applicable to both TGRS codes and twisted Goppa codes. Their key contribution lies in an enhanced decoding method that achieves
 $\frac{n-k}{2}$-error correction for even-degree Goppa polynomials,
 thus improving upon previous bounds.

Building on these foundational works,
this paper further investigates decoding algorithms for TGRS codes.
Specifically, we present unified decoding algorithms that accommodate both MDS and NMDS TGRS codes with flexible parameters, extending previous results that were restricted to even differences between code length and code dimension.
Our findings encompass certain situations discussed in \cite{SYJL}.
Moreover, the hook $h$ of each   TGRS code in this paper is arbitrary,
contrasting with the restricted cases of $h = 0$ or $k-1$ in \cite{SYJL}.
Our approach utilizes Gaussian elimination to efficiently solve twisted polynomials, achieving polynomial-time decoding complexity $O(n^3)$.
Consequently, in certain scenarios,
the algorithms provided in this paper can demonstrate greater efficiency.
It is also worth mentioning that our algorithms apply not only to
 TGRS codes but also to twisted Goppa codes,
 thereby broadening their applicability compared to existing methods.

The remainder of this paper is organized as follows:
In Section 2, we introduce basic notation and results
concerning MDS, AMDS, NMDS, and TGRS codes.
In Sections 3 and 4, we provide decoding algorithms for MDS and NMDS TGRS codes, respectively, based on the Gaussian elimination method.
Section 5 offers a comparison with existing results.
Finally, Section 6 concludes the paper and discusses potential future work.

\section{Preliminaries}
Let $\mathbb{F}_q$ be the finite field with $q$ elements,
where $q$ is a prime power.
Let $n$ be a positive integer and
let $\mathbb{F}_q^n$ denote the vector space of
all $n$-tuples over the finite field $\mathbb{F}_q$.
We typically represent the vectors in
 $\mathbb{F}_q^n$ as row vectors.
Let $\mathcal{C}$ be a non-empty subset of $\mathbb{F}_q^n$.
If $\mathcal{C}$ forms a subspace of $\mathbb{F}_q^n$, we
call  $\mathcal{C}$
\emph{a linear code}. The vectors in $\mathcal{C}$
are referred to as \emph{codewords}.
If $\mathcal{C}$ has dimension $k$ over $\mathbb{F}_q$, we say that $\mathcal{C}$ is an $[n,k]$ linear code
over $\mathbb{F}_q$. A \emph{generator matrix} for an $[n,k]$ linear code $\mathcal{C}$ is any $k\times n$ matrix $G$ whose rows form a
basis for $\mathcal{C}$.
For any vectors
$\mathbf{a}=(a_1,a_2,\cdots,a_n)$
and
$\mathbf{b}=(b_1,b_2,\cdots,b_n)$ in $\mathbb{F}_q^n$,
the inner product $\mathbf{a}\cdot \mathbf{b}$ is defined as
$\mathbf{a}\cdot \mathbf{b}=\sum_{i=1}^na_ib_i$.
Let $\mathcal{C}$ be a $k$-dimensional linear code over $\mathbb{F}_q$.
The \emph{dual code} of  $\mathcal{C}$ is defined as
$$ \mathcal{C}^\perp=\big\{\mathbf{x}\in \mathbb{F}_q^n
\,\big|\,\mathbf{x}\cdot \mathbf{c}=0 ~\hbox{for any}~\mathbf{c}\in \mathcal{C}\big\}.$$
A generator matrix of the dual code   $\mathcal{C}^\perp$
is referred to as a \emph{parity-check matrix} for $\mathcal{C}$.
Thus, if  $H$ is  a  parity-check matrix  for $\mathcal{C}$, we can express
$\mathcal{C}$ as
$$\mathcal{C}=\big\{\mathbf{c}\in \mathbb{F}_q^n\,|\,H\mathbf{c}^T=\mathbf{0}^T\},$$
where $\mathbf{c}^T$ denotes  the transpose of the vector $\mathbf{c}$.


An important parameter of a linear code is
its minimum Hamming distance.
The Hamming distance $d(\mathbf{x},\mathbf{y})$
between two vectors $\mathbf{x},\mathbf{y}\in \mathbb{F}_q^n$
is defined to be the number
of coordinates in which $\mathbf{x}$ and $\mathbf{y}$ differ.
The \emph{minimum Hamming distance} of a linear code $\mathcal{C}$
is the smallest Hamming distance between any two distinct codewords in
$\mathcal{C}$,
and
 it plays a crucial role in determining the
 error-correcting capability of $\mathcal{C}$.
If the minimum Hamming distance $d$ of an $[n,k]$ linear code is known,
 we refer to the  code as an $[n,k,d]$ code.
An $[n,k,d]$ linear code
can correct up to $\lfloor\frac{d-1}{2}\rfloor$ errors.

If an $[n,k,d]$ linear code over $\mathbb{F}_q$ exists,
 then it satisfies the Singleton bound:
$$d\leq n-k+1.$$
An $[n,k,d]$ linear code for which equality holds in the
Singleton Bound is called \emph{maximum distance separable} (MDS).
An $[n,k,d]$ linear code $\mathcal{C}$ over $\mathbb{F}_q$
is termed \emph{almost MDS (AMDS)} if $d=n-k$.
Furthermore, an $[n,k,d]$ linear code $\mathcal{C}$
is classified as \emph{near MDS (NMDS)} if both
$\mathcal{C}$ and its dual code $\mathcal{C}^\perp$ are AMDS codes.

Now, let us recall
the definition of twisted generalized Reed-Solomon codes
(see \cite{BBP, BPN, LR}).
To facilitate understanding,
we  first restate the notion of twisted polynomials.

\begin{Definition}\label{def1}
Let $k,t$ and $h$ be positive integers such that
$0\leq h<k\leq q$,
and let $\eta\in \mathbb{F}^*_q=\mathbb{F}_q\backslash\{0\}$.
We define the set of $(k,t,h,\eta)$-twisted polynomials by
$$\mathcal{V}_{(k,t,h,\eta)}=\bigg\{f(x)=\sum_{i=0}^{k-1}a_ix^i+\eta a_hx^{k-1+t}\,\bigg|\, a_i\in \mathbb{F}_q~\hbox{for}~0\leq i\leq k-1 \bigg\}.$$
\end{Definition}

We are now ready to present the definition
of twisted generalized Reed-Solomon codes.
\begin{Definition}\label{def2}
Let $\alpha_1,\alpha_2,\cdots,\alpha_n$ be distinct
elements of  $\mathbb{F}_q$ and
write $\alpha=(\alpha_1,\alpha_2,\cdots,\alpha_n)$.
Let $v_1,v_2,\cdots,v_n$ be nonzero elements in $\mathbb{F}_q$ and
define $\mathbf{v}=(v_1,v_2,\cdots,v_n)$.
Consider $k,t,h,\eta$
 chosen as in Definition \ref{def1} such that $k<n$ and $t\leq n-k$.
We keep the notation $\mathcal{V}_{(k,t,h,\eta)}$ as
used  in Definition \ref{def1}.
The twisted generalized Reed-Solomon code of
length $n$ and dimension $k$ is defined as
$$\mathcal{C}_{k,t,h}(\alpha,\mathbf{v},\eta)=
\Big\{\big(v_1f(\alpha_1),v_2f(\alpha_2),\cdots,v_nf(\alpha_n)\big)
\,\Big|\,f(x)\in \mathcal{V}_{(k,t,h,\eta)}\Big\},$$
where  $h$ is referred to as the hook and  $t$ as the twist.
\end{Definition}
For convenience, we abbreviate twisted generalized
Reed-Solomon codes as TGRS codes.
In this paper, we focus on studying the decoding algorithm for
a class of TGRS codes.
By virtue of the equivalence of codes
(refer to \cite[Section 2.1]{Mac}),  we
may assume that $\mathbf{v}=\mathbf{1}=(1,1,\cdots,1)$.
Additionally, we always assume that $t=1$.
 Hence, our aim is to present a decoding
 algorithm for the TGRS codes
 $\mathcal{C}_{k,1,h}(\alpha,\mathbf{1},\eta)$,
 where $\mathbf{1}=(1,1,\cdots,1)$.
The TGRS code $\mathcal{C}_{k,1,h}(\alpha,\mathbf{1},\eta)$ is
given by
$$\mathcal{C}_{k,1,h}(\alpha,\mathbf{1},\eta)=
\Big\{\big(f(\alpha_1),f(\alpha_2),\cdots,f(\alpha_n)\big)
\,\Big|\,f(x)\in \mathcal{V}_{(k,1,h,\eta)}\Big\},$$
where $f(x)\in \mathcal{V}_{(k,1,h,\eta)}$
means that $f(x)$ takes the form
$$f(x)=\sum_{i=0}^{k-1}a_ix^i+\eta a_hx^{k}.$$
Thus, this type of TGRS code has the following
generator matrix:
$$
\begin{pmatrix}
1 & 1 & \cdots & 1\\
\alpha_1 & \alpha_2 & \cdots & \alpha_n\\
\vdots & \vdots &\ddots & \vdots\\
\alpha_1^{h-1} & \alpha_2^{h-1} & \cdots & \alpha_n^{h-1}\\
\alpha_1^h+\eta\alpha_1^k & \alpha_2^h+\eta\alpha_2^k & \cdots & \alpha_n^h+\eta\alpha_n^k\\
\alpha_1^{h+1} & \alpha_2^{h+1} & \cdots & \alpha_n^{h+1}\\
\vdots & \vdots &\ddots & \vdots\\
\alpha_1^{k-1} & \alpha_2^{k-1} & \cdots & \alpha_n^{k-1}
\end{pmatrix}.
$$

\section{A decoding algorithm for MDS TGRS codes}
In this section, we investigate the decoding algorithm for
MDS
TGRS codes.
We begin by presenting the necessary and sufficient
condition under which a TGRS code qualifies as
an MDS code (see \cite{BPN, HY}).
Following this, we will provide a decoding algorithm
tailored for this class of MDS codes.
The following lemma will be instrumental in
determining whether a TGRS code is MDS:
\begin{lem}(\cite[Lemma 2.6]{HY})\label{mdscondition}
Keeping the notation
$\mathcal{C}_{k,1,h}(\alpha,\mathbf{1},\eta)$ as defined previously,
 the TGRS code $\mathcal{C}_{k,1,h}(\alpha,\mathbf{1},\eta)$ is MDS if and only if
$$
\eta\sum_{i\in I}\alpha_i\neq -1,
~\hbox{for any}~I\subset\{1,2,\cdots,n\}~\hbox{with}~|I|=k.
$$
\end{lem}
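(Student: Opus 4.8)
The plan is to reduce the claim to the classical maximal-minor characterization of MDS codes and then to evaluate those minors explicitly. Recall that a linear code with a $k\times n$ generator matrix $G$ is MDS if and only if every $k\times k$ submatrix $G_I$, obtained by selecting the columns indexed by a set $I\subset\{1,2,\dots,n\}$ with $|I|=k$, is nonsingular. Thus the entire statement reduces to deciding, for each such $I$, when the maximal minor $\det G_I$ vanishes, using the explicit generator matrix displayed above.

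Fix $I$ and compute $\det G_I$ by isolating the single twisted row, namely the row of index $h$, whose entries are $\alpha_i^{h}+\eta\alpha_i^{k}$. Writing this row as the ordinary row $(\alpha_i^{h})_{i\in I}$ plus $\eta$ times the row $(\alpha_i^{k})_{i\in I}$ and invoking multilinearity of the determinant in that row, I would split $\det G_I=\det A_I+\eta\,\det B_I$, where $A_I$ is the ordinary Vandermonde matrix on $\{\alpha_i\}_{i\in I}$ with exponents $0,1,\dots,k-1$, and $B_I$ is obtained from $A_I$ by replacing the exponent-$h$ row with the exponent-$k$ row. Here $\det A_I=\Delta_I:=\prod_{i<j,\ i,j\in I}(\alpha_j-\alpha_i)\neq 0$, because the $\alpha_i$ are pairwise distinct, so the problem becomes understanding the generalized Vandermonde determinant $\det B_I$, whose exponent set is $\{0,1,\dots,k-1,k\}\setminus\{h\}$.

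The heart of the argument is the evaluation of $\det B_I$. Sorting its rows into increasing-exponent order (and recording the sign of the permutation) and then dividing by $\Delta_I$, the quotient is the Schur polynomial $s_\lambda(\alpha_I)$, where $\lambda$ is obtained by subtracting the staircase $(k-1,\dots,1,0)$ from the sorted exponents; one checks that $\lambda=(1^{k-h})$ is a single column, so $\det B_I/\Delta_I=\pm e_{k-h}(\alpha_I)$, the degree-$(k-h)$ elementary symmetric polynomial in $\{\alpha_i\}_{i\in I}$. Consequently $\det G_I=\Delta_I\bigl(1\pm\eta\,e_{k-h}(\alpha_I)\bigr)$, and since $\Delta_I\neq0$ the minor vanishes precisely when $\eta\,e_{k-h}(\alpha_I)=\mp1$; in the case at hand this elementary symmetric polynomial is $\sum_{i\in I}\alpha_i$, and the sign bookkeeping fixes the forbidden value at $-1$. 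The code is therefore MDS if and only if no index set $I$ with $|I|=k$ realizes $\eta\sum_{i\in I}\alpha_i=-1$, which is exactly the asserted condition.

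The main obstacle is this middle step: evaluating $\det B_I$ correctly as a signed elementary symmetric polynomial and, in particular, controlling the two sources of sign — the permutation that reorders rows into increasing-exponent form, and the Schur-to-elementary identification — since it is precisely this sign that decides whether the excluded value is $-1$ rather than $+1$. A route that sidesteps Schur-function machinery is to factor $\Delta_I$ out of $\det G_I$ directly, note that the quotient is a symmetric polynomial of the expected degree, and pin it down by a Laplace expansion of $\det B_I$ along the twisted row, expressing it through ratios of Vandermonde determinants; either route reduces the MDS property to the single inequality in the statement.
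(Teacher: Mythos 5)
Your overall strategy is the right one, and in fact it is the standard route to this result: the paper itself offers no proof (the lemma is imported verbatim from \cite{HY}, Lemma 2.6), and your reduction to the nonvanishing of all maximal minors of the generator matrix, the split $\det G_I=\det A_I+\eta\det B_I$ by multilinearity in the twisted row, and the identification of the generalized Vandermonde quotient as the Schur polynomial $s_{(1^{k-h})}=e_{k-h}$ are all correct. The genuine gap is the final identification: you assert that ``in the case at hand this elementary symmetric polynomial is $\sum_{i\in I}\alpha_i$,'' but $e_{k-h}$ is the first elementary symmetric polynomial only when $k-h=1$, i.e.\ when $h=k-1$, whereas the statement (and the surrounding text of the paper, which stresses that the hook is arbitrary) allows any $0\leq h<k$. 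Carrying your own computation through carefully, the exponent-$k$ row must pass $k-1-h$ rows to reach sorted position, giving $\det B_I=(-1)^{k-1-h}\,\Delta_I\,e_{k-h}(\alpha_I)$ and hence $\det G_I=\Delta_I\bigl(1+(-1)^{k-1-h}\,\eta\,e_{k-h}(\alpha_I)\bigr)$. The correct MDS criterion for general hook is therefore $\eta\,e_{k-h}(\alpha_I)\neq(-1)^{k-h}$ for every $I$ with $|I|=k$, which coincides with the stated condition $\eta\sum_{i\in I}\alpha_i\neq-1$ exactly when $h=k-1$ --- the setting actually treated in \cite{HY}. For any other hook the condition is genuinely different: for $h=0$, for instance, it reads $\eta\prod_{i\in I}\alpha_i\neq(-1)^{k}$, a product condition rather than a sum condition.

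So the ``sign bookkeeping'' you deferred is not the only obstacle: the symmetric function itself changes with $h$, and your machinery, executed honestly, shows that the lemma as transcribed here (with arbitrary $h$) cannot be proved by this argument because it fails for $h<k-1$. To repair the writeup, either restrict to $h=k-1$, in which case your argument closes cleanly (no row transposition is needed, the sign is $+1$, and the forbidden value is indeed $-1$), or prove the general-$h$ criterion $\eta\,e_{k-h}(\alpha_I)\neq(-1)^{k-h}$ and observe that the sum form is the special case $h=k-1$. Your proposed Laplace-expansion alternative does not evade this, since it computes the same determinant and must produce the same $h$-dependent answer.
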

Now we focus on presenting a decoding algorithm for an $[n, k]$
 MDS  TGRS  code where $n-k$ is odd,
 utilizing the Gaussian elimination method.
 The following theorem establishes that there exists a decoding algorithm capable of correcting up to $\lfloor \frac{d-1}{2} \rfloor$ errors,
 where $d = n - k + 1$ is the minimum distance.
\begin{Theorem}\label{MDSGRSisodd}
Let $\mathcal{C}_{k,1,h}(\alpha,\mathbf{1},\eta)$ be defined as above.
Assume that $n-k$ is odd. Then there exists a decoding algorithm for
 the $k$-dimensional MDS TGRS code
 $\mathcal{C}_{k,1,h}(\alpha,\mathbf{1},\eta)$ of length $n$,
which corrects up to $\lfloor \frac{n-k}{2}\rfloor$ errors and completes in a number of operations which is polynomial in $n$.
\end{Theorem}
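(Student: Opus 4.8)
The plan is to reduce the problem to a Welch--Berlekamp-type key equation and to solve the resulting \emph{linear} system by Gaussian elimination. Write the received vector as $\mathbf{y}=(y_1,\dots,y_n)$ with $y_i=f(\alpha_i)+e_i$, where $f(x)=\sum_{i=0}^{k-1}a_ix^i+\eta a_hx^{k}\in\mathcal{V}_{(k,1,h,\eta)}$ has degree at most $k$ and the error vector $\mathbf{e}$ has weight $\tau\le T:=\lfloor\frac{n-k}{2}\rfloor=\frac{n-k-1}{2}$ (using that $n-k$ is odd). Let $E=\{i:e_i\neq0\}$ and introduce the error-locator polynomial $\sigma(x)=\prod_{i\in E}(x-\alpha_i)$, of degree $\tau\le T$, together with $N(x)=\sigma(x)f(x)$, of degree at most $\tau+k\le T+k$. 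The key observation is the identity $N(\alpha_i)=y_i\sigma(\alpha_i)$ for every $i$: when $e_i=0$ it reads $N(\alpha_i)=\sigma(\alpha_i)f(\alpha_i)$, and when $i\in E$ both sides vanish because $\sigma(\alpha_i)=0$.

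First I would set up the homogeneous linear system in the unknown coefficients of a pair $(\widetilde\sigma,\widetilde N)$ with $\deg\widetilde\sigma\le T$ and $\deg\widetilde N\le T+k$, imposing the $n$ conditions $\widetilde N(\alpha_i)=y_i\widetilde\sigma(\alpha_i)$ for $i=1,\dots,n$. This system has $2T+k+2=n+1$ unknowns and only $n$ equations, so it admits a nonzero solution, which I obtain by Gaussian elimination in $O(n^3)$ operations; note that the genuine pair $(\sigma,N)$ constructed above is itself a solution, so the solution space is nontrivial for the correct reason. Because the whole procedure never refers to the hook, it applies verbatim for arbitrary $h$.

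The heart of the argument is to show that every nonzero solution $(\widetilde\sigma,\widetilde N)$ reproduces the transmitted message, i.e.\ $\widetilde N/\widetilde\sigma=f$. Comparing such a solution with the genuine pair $(\sigma,N)$, both satisfy the interpolation conditions, so the polynomial $\widetilde N\sigma-N\widetilde\sigma$ vanishes at all $n$ points $\alpha_1,\dots,\alpha_n$; since its degree is at most $(T+k)+T=2T+k=n-1<n$, it must be identically zero, whence $\widetilde N\sigma=N\widetilde\sigma$ and therefore $\widetilde N/\widetilde\sigma=N/\sigma=f$. The same degree count also forces $\widetilde\sigma\neq0$ for any nonzero solution: if $\widetilde\sigma=0$ then $\widetilde N$ would vanish at the $n$ points while having degree at most $T+k\le n-1$, forcing $\widetilde N=0$ as well. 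Thus I recover $f$ by the polynomial division $f=\widetilde N/\widetilde\sigma$, verify that it lies in $\mathcal{V}_{(k,1,h,\eta)}$, and output the codeword $(f(\alpha_1),\dots,f(\alpha_n))$; all of these post-processing steps cost $O(n^2)$, so the overall complexity remains $O(n^3)$.

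The step I expect to be delicate is the degree bookkeeping that validates the vanishing argument, and this is precisely where the parity hypothesis is essential. With $n-k$ odd one has $2T+k=n-1$, leaving exactly the room needed to conclude that a polynomial of degree at most $n-1$ vanishing at $n$ points is zero; for even $n-k$ the natural choice $T=\frac{n-k}{2}$ gives $2T+k=n$, and a nonzero polynomial of degree $n$ can vanish at $n$ points, so uniqueness of $\widetilde N/\widetilde\sigma$ may fail. This explains structurally why the even case must sacrifice one error (matching the $\lfloor\frac{n-k}{2}\rfloor-1$ bound stated in the introduction) and is consistent with the underlying fact that $\mathcal{C}_{k,1,h}(\alpha,\mathbf{1},\eta)$ is a $k$-dimensional subcode of the Reed--Solomon code evaluating all polynomials of degree at most $k$, whose minimum distance $n-k$ yields unique decoding radius $\frac{n-k-1}{2}=T$ in the odd case. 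The MDS hypothesis (Lemma \ref{mdscondition}) guarantees that $T=\lfloor\frac{n-k}{2}\rfloor$ errors is indeed the true correction capability, so the algorithm attains it.
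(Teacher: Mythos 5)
Your proposal is correct and follows essentially the same route as the paper: you set up the identical homogeneous Welch--Berlekamp system $\widetilde N(\alpha_i)=y_i\widetilde\sigma(\alpha_i)$ with the same degree budget ($\deg\widetilde\sigma\le\lfloor\frac{n-k}{2}\rfloor=T$, $\deg\widetilde N\le T+k$, matching the paper's $h(x)$ and $g(x)$), invoke the same count of $n+1$ unknowns against $n$ equations to get a nonzero solution by Gaussian elimination, and recover $f$ by polynomial division. The only differences are cosmetic: where the paper argues directly that $\varphi=g-hf$ vanishes at the $\ge n-\lfloor\frac{n-k}{2}\rfloor$ error-free positions while having degree at most $\frac{n+k-1}{2}$, you cross-multiply against the genuine pair $(\sigma,\sigma f)$ and count zeros at all $n$ points against degree $2T+k=n-1$ (an equally tight bound), and you additionally spell out why $\widetilde\sigma\neq 0$, a point the paper dismisses as ``readily seen.''
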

\begin{proof}
Suppose  we have received the vector
$(y_1,y_2,\cdots,y_n)$ in $\mathbb{F}_q^n$.
Our goal is to find the
$(k,1,h,\eta)$-twisted polynomial
$f(x)\in \mathcal{V}_{(k,1,h,\eta)}$ of degree at most $k$
such that
\begin{equation}\label{error vector31}
(y_1,y_2,\cdots,y_n)=(f(\alpha_1),f(\alpha_2),\cdots,f(\alpha_n))+\mathbf{e},
\end{equation}
where $\mathbf{e}$ is the error vector with weight at most $\lfloor \frac{n-k}{2}\rfloor$.
Since $\mathcal{C}_{k,1,h}(\alpha,\mathbf{1},\eta)$ is an MDS code,
we can state that
$$\frac{n-k}{2}=\frac{d-1}{2}.$$

Let $h(x)$ be an arbitrary polynomial of degree at most $\lfloor\frac{n-k}{2}\rfloor$ and let $g(x)$ be an arbitrary polynomial of degree at most $k+\lceil\frac{n-k}{2}\rceil-1$.
We will find
the coefficients of $g(x)$ and $h(x)$ by solving the system of $n$ equations:
$$g(\alpha_j)-h(\alpha_j)y_j=0,~\mbox{for} ~j=1,2,\cdots,n.$$
This homogeneous linear system has
$$\deg(h(x))+1+\deg(g(x))+1
=\lfloor\frac{n-k}{2}\rfloor+1+k+\lceil\frac{n-k}{2}\rceil-1+1=n+1$$
unknowns (the coefficients of $g(x)$ and $h(x)$)
and $n$ equations. Hence, we can find a non-trival solution for $h(x)$ and
$g(x)$
using Gaussian elimination in polynomial time with respect to
$n$. It is readily seen that  $h(x)\neq 0$.

By the assumption (\ref{error vector31}),
there is a $(k,1,h,\eta)$-twisted polynomial $f(x)$ of
degree at most $k$
such that $y_j=f(\alpha_j)$ for at least
 $n-\lfloor\frac{n-k}{2}\rfloor$ values of $j$.
Let
$$
J=\big\{j\,\big|\,y_j=f(\alpha_j)~\hbox{for}~1\leq j\leq n\big\}.
$$
Then we have the lower bound for the number of elements in the set $J$:
$$|J|\geq n-\lfloor\frac{n-k}{2}\rfloor.$$
Furthermore, we can observe that
$$g(\alpha_j)-h(\alpha_j)f(\alpha_j)=g(\alpha_j)-h(\alpha_j)y_j=0
~\hbox{for any}~ j\in J.$$
Thus, for these values of $j\in J$,
$\alpha_j$ is a zero of $g(x)-f(x)h(x)$.
Consequently, the polynomial $g(x)-f(x)h(x)$ has at least $|J|\geq n-\lfloor\frac{n-k}{2}\rfloor$ distinct zeros in $\mathbb{F}_q$.
Next, we need to consider the degree of the polynomial
$\varphi(x)=g(x)-h(x)f(x)$. Suppose  $\varphi(x)\neq 0$. It follows that
\begin{eqnarray*}
\deg(\varphi(x))&=& \max\big\{\deg(g(x)), \deg(h(x)f(x))\big\}\\
&\leq & \max\big\{k+\lceil\frac{n-k}{2}\rceil-1, \lfloor\frac{n-k}{2}\rfloor+k\big\}.\\
\end{eqnarray*}
Since $n-k$ is odd, we simplify:
\begin{eqnarray*}
k+\lceil\frac{n-k}{2}\rceil-1&=&k+(n-k-\lfloor\frac{n-k}{2}\rfloor)-1\\
&=&n-\lfloor\frac{n-k}{2}\rfloor-1\\
&=& n-\frac{n-k-1}{2}-1\\
&=&\frac{n+k-1}{2}\\
&=&\frac{n-k-1}{2}+k\\
&=&\lfloor\frac{n-k}{2}\rfloor+k,
\end{eqnarray*}
yielding
$$\deg(\varphi(x))=\deg(g(x)-h(x)f(x))\leq\deg(g(x))=k+\lceil\frac{n-k}{2}\rceil-1.$$
Since the number of zeros of a polynomial cannot exceed its degree, we therefore have
$$n-\lfloor\frac{n-k}{2}\rfloor\leq k+\lceil\frac{n-k}{2}\rceil-1.$$
However, we can re-examine this relationship:
\begin{eqnarray*}
k+\lceil\frac{n-k}{2}\rceil-1& = &k+n-k-\lfloor\frac{n-k}{2}\rfloor-1 \\
&=& \frac{n+k-1}{2}\\
&<&\frac{n+k+1}{2}\\
&=&n-\lfloor\frac{n-k}{2}\rfloor.
\end{eqnarray*}
This leads to a contradiction, implying that $\varphi(x)$
 must be identically zero. Therefore, $h(x)$ divides $g(x)$ and the quotient is $f(x)$.
\end{proof}

For the case where $n-k$ is even, our method remains applicable;
however, the decoding algorithm can only correct up to
$\lfloor \frac{d-1}{2}\rfloor-1=\lfloor \frac{n-k}{2}\rfloor-1$ errors.
Although the proof follows similarly to that of Theorem \ref{MDSGRSisodd},
for completeness it is presented along with the theorem.
\begin{Theorem}\label{MDSGRSiseven}
Let $\mathcal{C}_{k,1,h}(\alpha,\mathbf{1},\eta)$ be defined as above.
Assume that $n-k$ is even. Then,
 there exists a decoding algorithm for the $k$-dimensional MDS TGRS code $\mathcal{C}_{k,1,h}(\alpha,\mathbf{1},\eta)$ of length $n$,
which corrects up to $\frac{n-k}{2}-1$ errors and completes in a number of operations which is polynomial in $n$.
\end{Theorem}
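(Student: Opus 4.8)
The plan is to mimic the Welch--Berlekamp style rational-interpolation argument used for Theorem \ref{MDSGRSisodd}, solving for two auxiliary polynomials by Gaussian elimination, but to lower the target error weight by one so as to restore the strict degree inequality that the odd parity supplied for free. Writing $\tau=\frac{n-k}{2}-1$ for the number of errors to be corrected, I would seek a nonzero polynomial $h(x)$ of degree at most $\frac{n-k}{2}$ together with a polynomial $g(x)$ of degree at most $k+\frac{n-k}{2}-1$, subject to the homogeneous linear constraints
$$g(\alpha_j)-h(\alpha_j)y_j=0,\qquad j=1,2,\ldots,n.$$
Counting coefficients gives $\left(\frac{n-k}{2}+1\right)+\left(k+\frac{n-k}{2}\right)=n+1$ unknowns against $n$ equations, so Gaussian elimination yields a nontrivial solution in time polynomial in $n$, and exactly as in the odd case one checks that $h(x)\neq 0$.

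Next I would set $J=\{\,j : y_j=f(\alpha_j),\ 1\leq j\leq n\,\}$. Since the error vector in (\ref{error vector31}) now has weight at most $\tau=\frac{n-k}{2}-1$, the bound improves to $|J|\geq n-\tau=\frac{n+k}{2}+1$, one larger than in the borderline full-capacity situation. Every $\alpha_j$ with $j\in J$ is a root of $\varphi(x)=g(x)-h(x)f(x)$, so $\varphi$ possesses at least $\frac{n+k}{2}+1$ distinct zeros in $\mathbb{F}_q$. On the other hand, using $\deg f\leq k$ together with the evenness of $n-k$,
$$\deg(\varphi)\leq\max\left\{k+\frac{n-k}{2}-1,\ \frac{n-k}{2}+k\right\}=\frac{n+k}{2},$$
which is strictly less than the number of zeros. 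Hence $\varphi\equiv 0$, so $h(x)$ divides $g(x)$ and the quotient recovers the desired twisted polynomial $f(x)$.

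The step I expect to be the crux --- and the reason the even case corrects one fewer error --- is precisely this degree-versus-zero-count comparison. In the odd case the ceiling/floor asymmetry pushed $\deg(g)=\frac{n+k-1}{2}$ strictly below $|J|\geq\frac{n+k+1}{2}$, so the contradiction came for free; when $n-k$ is even both quantities collapse to $\frac{n+k}{2}$ if one insists on correcting the full $\frac{n-k}{2}$ errors, and the divisibility conclusion fails. Trading away a single error, i.e. taking $\tau=\frac{n-k}{2}-1$, raises $|J|$ by one and reopens a gap of exactly one unit. I would therefore lay out the degree arithmetic explicitly to expose the strict inequality $\frac{n+k}{2}<\frac{n+k}{2}+1$, since the entire argument rests on this lone unit of slack.
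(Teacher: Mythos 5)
Your proposal is correct and takes essentially the same route as the paper's proof: the same Welch--Berlekamp-style system $g(\alpha_j)-h(\alpha_j)y_j=0$ solved by Gaussian elimination, the same count $|J|\geq \frac{n+k}{2}+1$, and the same contradiction between $\deg(\varphi)\leq \frac{n+k}{2}$ and the number of zeros, forcing $\varphi\equiv 0$ and $f=g/h$. The only (harmless) deviation is that you cap $\deg g$ at $k+\frac{n-k}{2}-1$, giving $n+1$ unknowns, whereas the paper allows $\deg g\leq k+\frac{n-k}{2}$ with $n+2$ unknowns; either budget exceeds the $n$ equations, so a nontrivial kernel vector exists and the rest of the argument goes through unchanged.
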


\begin{proof}
Suppose that we have received the vector $(y_1,y_2,\cdots,y_n)$. We want to find the $(k,1,h,\eta)$-twisted polynomial $f(x)\in \mathcal{V}_{(k,1,h,\eta)}$ of degree at most $k$
such that
\begin{equation}\label{error vector32}
(y_1,y_2,\cdots,y_n)=(f(\alpha_1),f(\alpha_2),\cdots,f(\alpha_n))+\mathbf{e},
\end{equation}
where $\mathbf{e}$ is the error vector of weight at most $\lfloor \frac{n-k}{2}\rfloor-1=\frac{n-k}{2}-1$.
Since $\mathcal{C}_{k,1,h}(\alpha,\mathbf{1},\eta)$ is an MDS code,
one has
$$\frac{n-k}{2}=\frac{d-1}{2}.$$

Let $h(x)$ be an arbitrary polynomial of degree at most $\lfloor\frac{n-k}{2}\rfloor=\frac{n-k}{2}$ and let $g(x)$ be an arbitrary polynomial of degree at most $k+\lceil\frac{n-k}{2}\rceil=\frac{n+k}{2}$.
We determine the coefficients of $g(x)$ and $h(x)$ by solving the system of $n$ equations:
$$g(\alpha_j)-h(\alpha_j)y_j=0,~\mbox{for} ~j=1,2,\cdots,n.$$
This homogeneous linear system has
$$\deg(h(x))+1+\deg(g(x))+1
=\lfloor\frac{n-k}{2}\rfloor+1+k+\lceil\frac{n-k}{2}\rceil+1=n+2$$
unknowns (the coefficients of $g(x)$ and $h(x)$)
and $n$ equations. Hence, we can find a non-trival solution for $h(x)$ and
$g(x)$) in a number of operations that is polynomial in $n$ using Guassian elimination. Clearly, $h(x)\neq 0$.

By assumption (\ref{error vector32}), there is a polynomial $f(x)$ of degree at most $k$
such that $y_j=f(\alpha_j)$ for at least $n-\lfloor\frac{n-k}{2}\rfloor+1$ values of $j$.
Thus we let $J=\{j\,|\,y_j=f(\alpha_j), 1\leq j\leq n\}$ and then the number $|J|$ of elements in the set $J$  has the low bound as follows:
$$|J|\geq n-\lfloor\frac{n-k}{2}\rfloor+1=\frac{n+k}{2}+1.$$

On the other hand, we have
$$g(\alpha_j)-h(\alpha_j)f(\alpha_j)=g(\alpha_j)-h(\alpha_j)y_j=0
 ~\hbox{for any}~j\in J.$$
That is to say, for these values of $j\in J$, $\alpha_j$ is a zero of $g(x)-f(x)h(x)$.
Hence, this polynomial $g(x)-f(x)h(x)$ has at least $|J|\geq n-\lfloor\frac{n-k}{2}\rfloor+1=\frac{n+k}{2}+1$ distinct zeros in $\mathbb{F}_q$.

Now we consider the degree of the polynomial $\varphi(x)=g(x)-h(x)f(x)$. Suppose that $\varphi(x)\neq 0$. It follows that
\begin{eqnarray*}
\deg(\varphi(x))&=& \max\big\{\deg(g(x)), \deg(h(x)f(x))\big\}\\
&\leq & \max\big\{\frac{n+k}{2}, \frac{n-k}{2}+k\big\}\\
&=&\frac{n+k}{2}.
\end{eqnarray*}
This gets a contradiction since $\varphi(x)$ has at least $\frac{n+k}{2}+1$ distinct zeros.
Thus,  $\varphi(x)$ is identically zero. Therefore, $h(x)$ divides $g(x)$ and the quotient is $f(x)$.
\end{proof}

In the following, we provide examples to illustrate the above results.

\begin{Example}{\rm
Assume that $\mathbb{F}_9=\mathbb{F}_3[z]/\langle z^2+z+2\rangle=\{0,1,2,z,z+1,z+2,2z,2z+1,2z+2\}$, where $z^2=2z+1$.
Let $n=5,k=2, \eta=z\in \mathbb{F}_9$ and $\alpha=(\alpha_1,\alpha_2,\alpha_3,\alpha_4,\alpha_5)=(0,1,z,z+1,2z)$.
Then, consider the sums:
$$
\Big\{\sum_{i\in I}\alpha_i\,\big|\,~\hbox{for any}~I\subset\{1,2,\cdots,n\}
~\hbox{with}~ |I|=2\Big\}=\Big\{0,1,z,z+1,2z,z+2,2z+1\Big\},
$$
which shows that
$$
\eta\sum_{i\in I}\alpha_i=z\sum_{i\in I}\alpha_i\neq -1~\hbox{for any} ~I\subset\{1,2,\cdots,n\}~\hbox{with}~|I|=2.
$$
Thus, by Lemma \ref{mdscondition},
we can conclude that the TGRS code
$\mathcal{C}_{2,1,0}(\alpha,\mathbf{1},z)$ is MDS,
where
$$
\mathcal{C}_{2,1,0}(\alpha,\mathbf{1},z)=
\big\{(f(0),f(1),f(z),f(z+1),f(2z))\,\big|\,
f(x)=a_0+a_1x+za_0x^2 ~\hbox{for any}~ a_0,a_1\in \mathbb{F}_9\big\}.
$$
We   sent a codeword $\mathbf{u}$ of the $2$-dimensional MDS GRS code $\mathcal{C}_{2,1,0}(\alpha,\mathbf{1},z)$ over $\mathbb{F}_9$,
 and
we have received
$$\mathbf{y}=(y_1,y_2,y_3,y_4,y_5)=(1,2z,2z+1,2z+1,2z+2)=\mathbf{u}+\mathbf{e},$$
where the weight of $\mathbf{e}$ is at most $1$.
According to the algorithm in the proof of Theorem \ref{MDSGRSisodd},
we need to find a polynomial $g(x)$ of degree at most $3$ and a polynomial $h(x)$ of degree at most $1$, such that
$$g(\alpha_j)=h(\alpha_j)y_j, $$
for $j=1,2,3,4,5$.
The resulting equations are:
$$
\begin{cases}
g(0)=h(0),\\
g(1)=2zh(1),\\
g(z)=(2z+1)h(z),\\
g(z+1)=(2z+1)h(z+1),\\
g(2z)=(2z+2)h(2z).
\end{cases}
$$
Let us assume that
$$g(x)=d_3x^3+d_2x^2+d_1x+d_0$$
and
$$h(x)=c_1x+c_0,$$
for some coefficients $d_3,d_2, d_1, d_0, c_1, c_0\in \mathbb{F}_9$.
From the equations, we can form a linear system represented as:
$$
\begin{pmatrix}
1 & 0 & 0 & 0 & 2 & 0\\
1 & 1 & 1 & 1 & z & z\\
1 & z & 2z+1 & 2z+2 & z+2 & z+1\\
1 & z+1 & z+2 & 2z & z+2 & 2z\\
1 & 2z & 2z+1 & z+1 & z+1 & 2
\end{pmatrix}
\begin{pmatrix}
d_0 \\ d_1\\d_2 \\ d_3 \\c_0\\ c_1
\end{pmatrix}=\mathbf{0}.
$$
This implies all solutions to the system can be expressed as
$$
\begin{pmatrix}
d_0 \\ d_1\\d_2 \\ d_3 \\c_0\\ c_1
\end{pmatrix}=\nu
\begin{pmatrix}
1 \\ z\\2 \\ 2z \\1\\ 2
\end{pmatrix}
$$
for some $\nu\in \mathbb{F}_9$.
Therefore we have
$$
\begin{cases}
g(x)=\nu(1+zx+2x^2+2zx^3),\\[2pt]
h(x)=\nu(1+2x).
\end{cases}
$$
We can validate that $g(1)=h(1)=0$, yeilding
\begin{eqnarray*}
f(x)=\frac{g(x)}{h(x)}&=&\frac{\nu(1+2x)([1+(1+z)x+zx^2]}{\nu(1+2x)}\\
&=&1+(1+z)x+zx^2\\
&=&a_0+a_1x+za_0x^2.
\end{eqnarray*}
Evaluating the polynomial $f(x)$, we deduce that
$$\mathbf{u}=\big(f(0),f(1),f(z),f(z+1),f(2z)\big)=(1,2z+2,2z+1,2z+1,2z+2).$$
}
\end{Example}


\begin{Example}{\rm
Assume that $\mathbb{F}_{16}=\mathbb{F}_2[z]/\langle z^4+z+1\rangle=\{a+bz+cz^2+dz^3|a,b,c,d\in \mathbb{F}_2\}$, where $z^4=z+1$.
Let $n=8,k=2, \eta=z^2\in \mathbb{F}_{16}$ and
$$\alpha=(\alpha_1,\alpha_2,\alpha_3,\alpha_4,\alpha_5,\alpha_6,\alpha_7,\alpha_8)=(0,1,z,z+1,z^2,z^2+1,z^2+z,z^2+z+1).$$
Simple algebraic calculations show
$$
\Big\{\sum_{i\in I}\alpha_i\,\big|\,~\hbox{for any} ~I\subset\{1,2,\cdots,n\}
~\hbox{with}~|I|=2\Big\}=\Big\{0,1,z,z+1,z^2,z^2+1,z^2+z,z^2+z+1\Big\},
$$
giving
$$
\eta\sum_{i\in I}\alpha_i=z^2\sum_{i\in I}\alpha_i\neq -1=1~\hbox{for any} ~I\subset\{1,2,\cdots,n\}~\hbox{with}~|I|=2.
$$
Thus, by Lemma \ref{mdscondition}, we conclude that the TGRS code $\mathcal{C}_{2,1,0}(\alpha,\mathbf{1},z^2)$ is MDS,
where $\mathcal{C}_{2,1,0}(\alpha,\mathbf{1},z^2)$ is given by
\begin{equation*}
\begin{split}
\big\{(f(0),f(1),f(z),f(z+1),f(z^2),f(z^2+1),f(z^2+z),f(z^2+z+1))
&\,\big|\,\\
f(x)=a_0+a_1x+z^2a_0x^2 ~\hbox{for any}~a_0,a_1\in \mathbb{F}_{16}\big\}.
\end{split}
\end{equation*}
Now suppose we have sent a codeword $\mathbf{u}$ of
the $2$-dimensional MDS TGRS code $\mathcal{C}_{2,1,0}(\alpha,\mathbf{1},z^2)$ over $\mathbb{F}_{16}$ and
received 
$$\mathbf{y}=(y_1,y_2,y_3,y_4,y_5,y_6,y_7,y_8)=
(1,z^2+z,z^2+z,1,z^2+1,z+1,z,z^2)=\mathbf{u}+\mathbf{e},$$
where the weight of $\mathbf{e}$ is at most $1$.
Following the algorithm in the proof of Theorem \ref{MDSGRSiseven},
we want to find a polynomial $g(x)$ with degree at most $5$
and a polynomial $h(x)$ with degree at most $3$, such that
$$g(\alpha_j)=h(\alpha_j)y_j, $$
for $j=1,2,3,4,5,6,7,8$.
Assume
$$g(x)=d_5x^5+d_4x^4+d_3x^3+d_2x^2+d_1x+d_0$$
and
$$h(x)=c_3x^3+c_2x^2+c_1x+c_0,$$
for some $d_5,d_4,d_3,d_2, d_1, d_0, c_3,c_2, c_1, c_0\in \mathbb{F}_{16}$.
The resulting equations become:
$$
\begin{cases}
g(0)=h(0),\\
g(1)=(z^2+z)h(1),\\
g(z)=(z^2+z)h(z),\\
g(z+1)=h(z+1),\\
g(z^2)=(z^2+1)h(z^2),\\
g(z^2+1)=(z+1)h(z^2+1),\\
g(z^2+z)=zh(z^2+z),\\
g(z^2+z+1)=z^2h(z^2+z+1).
\end{cases}
$$

\textbf{Step 1:} Precompute powers in $\mathbb{F}_{16}$

We have $z^4 = z+1$. The necessary powers are computed as follows:
\[
\begin{array}{c|ccccc}
x & x^0 & x^1 & x^2 & x^3 & x^4 \\
\hline
0 & 1 & 0 & 0 & 0 & 0 \\
1 & 1 & 1 & 1 & 1 & 1 \\
z & 1 & z & z^2 & z^3 & z+1 \\
z+1 & 1 & z+1 & z^2+1 & z^3+z^2+z+1 & z \\
z^2 & 1 & z^2 & z+1 & z^3+z^2 & z^2+1 \\
z^2+1 & 1 & z^2+1 & z & z^3+z & z^2+1 \\
z^2+z & 1 & z^2+z & z^2+z+1 & 1 & z^2+z+1 \\
z^2+z+1 & 1 & z^2+z+1 & z^2+z & 1 & z^2+z
\end{array}
\]

\textbf{Step 2:} Write equations in terms of coefficients

Now we substitute each evaluation point into $g(x)$ and $h(x)$
and equate them in the following equations:

\begin{enumerate}
    \item Equation 1: $g(0) = h(0)$ implies
    \[
    d_0 = c_0.
    \]

    \item Equation 2: $g(z+1) = h(z+1)$ gives
    \[
    d_5(z^2+z) + d_4(z) + d_3(z^3+z^2+z+1) + d_2(z^2+1) + d_1(z+1) + d_0 = c_3(z^3+z^2+z+1) + c_2(z^2+1) + c_1(z+1) + c_0.
    \]

    \item Equation 3: $g(z^2+1) = (z+1) h(z^2+1)$ gives
    \[
    d_5(z^3+z) + d_4(z) + d_3(z^3+z) + d_2(z) + d_1(z^2+1) + d_0 = (z+1)\left[c_3(z^3+z) + c_2(z) + c_1(z^2+1) + c_0\right].
    \]

    \item Equation 4: $g(z^2+z) = z h(z^2+z)$ gives
    \[
    d_5(1) + d_4(z^2+z+1) + d_3(1) + d_2(z^2+z+1) + d_1(z^2+z) + d_0 = z\left[c_3(1) + c_2(z^2+z+1) + c_1(z^2+z) + c_0\right].
    \]

    \item Equation 5: $g(z^2+z+1) = z^2 h(z^2+z+1)$ gives
    \[
    d_5(z^2+z) + d_4(z^2+z) + d_3(1) + d_2(z^2+z) + d_1(z^2+z+1) + d_0 = z^2\left[c_3(z^2+z) + c_2(z^2+z) + c_1(z^2+z+1) + c_0\right].
    \]

    \item Equation 6: $g(1) = (z^2+z) h(1)$ gives
    \[
    d_5 + d_4 + d_3 + d_2 + d_1 + d_0 = (z^2+z)(c_3 + c_2 + c_1 + c_0).
    \]

    \item Equation 7: $g(z) = (z^2+z) h(z)$ gives
    \[
    d_5(z^2+z+1) + d_4(z+1) + d_3(z^3) + d_2(z^2) + d_1(z) + d_0 = (z^2+z)\left[c_3 z^3 + c_2 z^2 + c_1 z + c_0\right].
    \]

    \item Equation 8: $g(z^2) = (z^2+1) h(z^2)$ gives
    \[
    d_5(z^2+z+1) + d_4(z^2+1) + d_3(z^3+z^2) + d_2(z+1) + d_1(z^2) + d_0 = (z^2+1)\left[c_3(z^3+z^2) + c_2(z^2+1) + c_1(z^2) + c_0\right].
    \]
\end{enumerate}

\textbf{Step 3:} Solve via Gaussian elimination

Each equation is linear in coefficients $d_i$ and $c_i$.
We can express this system in matrix form and solve the resulting system over  $\mathbb{F}_{16}$. The solution space generally turns out to have dimension 2.

The general solution can be expressed as:

\[
\begin{aligned}
h(x) &= (x + z + 1)(x+1)(ax+b), \\
g(x) &= (z^2 x^2 + z x + 1)(x + z + 1)(x+1)(ax+b),
\end{aligned}
\]

where \( a,b \in \mathbb{F}_{16} \) can be any arbitrary elements in $\mathbb{F}_{16}$.

\textbf{Step 4:} Solve the twisted polynomial and output the codeword.

The twisted polynomial $f(x)$ becomes
$$f(x)=\frac{g(x)}{h(x)}=z^2x^2+zx+1,$$
yielding the codeword $\mathbf{u}$
\begin{eqnarray*}
\mathbf{u}&=&(f(0),f(1),f(z),f(z+1),f(z^2),f(z^2+1),f(z^2+z),f(z^2+z+1))\\
&=& (1,z^2+z+1,z^2+z,0,z^2+1,z+1,z,z^2).
\end{eqnarray*}
}
\end{Example}

\begin{Remark}{\rm
In the decoding algorithm presented in  Theorems \ref{MDSGRSisodd} and \ref{MDSGRSiseven},
we only use Gaussian elimination method to find the twisted polynomial.
The algorithmic complexity of solving the homogeneous linear system to determine the coefficients of $g(x)$
and $h(x)$ is $O(n^3)$. 
In contrast, as noted in \cite{SYJL}, algorithms based
on Euclidean  algorithm exhibit a time complexity of  $O(qn)$.
Consequently, in some scenarios,
the algorithm provided in this paper may prove to be more efficient than alternatives.
}
\end{Remark}

\begin{algorithm}[H]
\caption{Decoding Algorithm for $[n,k]$-MDS TGRS Codes ($n-k$ odd)}
\begin{algorithmic}[1]

\Require
    Field $\mathbb{F}_q$; \\
    Pairwise distinct $\alpha_1,\dots,\alpha_n\in\mathbb{F}_q$; \\
    Received word $y=(y_1,\dots,y_n)\in\mathbb{F}_q$.

\Ensure
    Decoded codeword $u=(u_1,\dots,u_n)\in\mathbb{F}_q$.

\Statex \Comment{Pre-compute target degrees}
\State $s \gets (n-k-1)/2$;\quad $t \gets (n+k-1)/2$

\Statex \Comment{Build the $(n\times (s+t+2))$ matrix $A$}
\For{$j=1$ \textbf{to} $n$}
    \For{$i=0$ \textbf{to} $s$}
        \State $A_{j,i+1} \gets \alpha_j^i$
    \EndFor
    \For{$i=0$ \textbf{to} $t$}
        \State $A_{j,s+i+2} \gets -\alpha_j^i y_j$
    \EndFor
\EndFor

\Statex \Comment{Find a non-zero vector in $\ker A$}
\State $\mathbf{z}=(g_0,\dots,g_s,h_0,\dots,h_t)^T \gets$ any non-zero solution of $A\mathbf{z}=0$ by Gaussian elimination

\Statex \Comment{Reconstruct the message polynomial}
\State $g(x)\gets\sum_{i=0}^s g_i x^i$;\quad $h(x)\gets\sum_{i=0}^t h_i x^i$
\State $f(x)\gets g(x)/h(x)$

\Statex \Comment{Form the output codeword}
\For{$j=1$ \textbf{to} $n$}
    \State $u_j \gets f(\alpha_j)$
\EndFor

\State \Return $u=(u_1,\dots,u_n)$

\end{algorithmic}
\end{algorithm}

\section{A decoding algorithm for NMDS TGRS codes}
In this section, we introduce a decoding algorithm for a
class of NMDS TGRS  codes, denoted as
$\mathcal{C}_{k,1,h}(\alpha,\mathbf{1},\eta)$, where $\mathbf{1}=(1,1,\cdots,1)$. The following lemma is often employed to determine whether a TGRS code is NMDS.

\begin{lem}(\cite[Lemma 2.6]{HY})\label{nmdscondition}
Let $\mathcal{C}_{k,1,h}(\alpha,\mathbf{1},\eta)$ be defined as above.
Then the TGRS code $\mathcal{C}_{k,1,h}(\alpha,\mathbf{1},\eta)$ is NMDS if and only if
$$
\eta\sum_{i\in I}\alpha_i= -1~~\hbox{for some}~ I\subset\{1,2,\cdots,n\}
~\hbox{with}~|I|=k.
$$
\end{lem}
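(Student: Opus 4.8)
The plan is to obtain the NMDS characterization as the exact logical complement of the MDS characterization recorded in Lemma \ref{mdscondition}, after installing the basic dichotomy for single-twist TGRS codes. The starting point is a degree count: every codeword of $\mathcal{C}_{k,1,h}(\alpha,\mathbf{1},\eta)$ is the evaluation vector $(f(\alpha_1),\dots,f(\alpha_n))$ of some $f\in\mathcal{V}_{(k,1,h,\eta)}$, and every nonzero such $f$ has degree at most $k$ (its top term is $\eta a_h x^{k}$). A nonzero polynomial of degree at most $k$ has at most $k$ roots among the distinct points $\alpha_1,\dots,\alpha_n$, so a nonzero codeword vanishes in at most $k$ coordinates and thus has Hamming weight at least $n-k$. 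Hence $d\ge n-k$, and together with the Singleton bound $d\le n-k+1$ this forces $d\in\{n-k,\,n-k+1\}$. I would also note in passing that the same count gives injectivity of the evaluation map (a degree-$\le k$ polynomial vanishing at $n\ge k+1$ points is zero), so $\dim\mathcal{C}=k$.

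Next I would invoke Lemma \ref{mdscondition}, which identifies the case $d=n-k+1$: the code is MDS if and only if $\eta\sum_{i\in I}\alpha_i\neq-1$ for every $I\subset\{1,\dots,n\}$ with $|I|=k$. Negating this, the code fails to be MDS --- equivalently $d=n-k$, so the code is AMDS --- if and only if there exists some $I$ with $|I|=k$ and $\eta\sum_{i\in I}\alpha_i=-1$. It then remains only to upgrade the statement ``$\mathcal{C}$ is AMDS'' to ``$\mathcal{C}$ is NMDS'', that is, to verify that the dual $\mathcal{C}_{k,1,h}(\alpha,\mathbf{1},\eta)^\perp$ is itself AMDS precisely when the primal is non-MDS. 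For this I would use the known description of the dual of a single-twist TGRS code as a generalized (twisted) Reed--Solomon-type evaluation code and rerun the degree/minor argument on it: the dual has dimension $n-k$, so $d^\perp\le k+1$ by Singleton; if $d^\perp=k+1$ the dual would be MDS, which forces $\mathcal{C}$ itself to be MDS, contradicting our assumption. Thus whenever $\mathcal{C}$ is non-MDS we must have $d^\perp=k$, so both $\mathcal{C}$ and $\mathcal{C}^\perp$ are AMDS, which is exactly the definition of NMDS. Combining the two paragraphs yields the claim: $\mathcal{C}_{k,1,h}(\alpha,\mathbf{1},\eta)$ is NMDS if and only if $\eta\sum_{i\in I}\alpha_i=-1$ for some $I$ of size $k$.

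The step I expect to be the main obstacle is precisely this dual direction, namely establishing that non-MDS implies NMDS rather than merely AMDS; everything else is the formal complementation of Lemma \ref{mdscondition}. Concretely, one must pin down the dual code explicitly and confirm $d^\perp=k$ in the non-MDS regime, which is cleanest if one first records the exact dual structure of a single-twist TGRS code and then applies the same evaluation/degree bookkeeping used for the primal. If one prefers to avoid computing the dual in detail, the statement quoted in the introduction --- that a TGRS code with one twist is \emph{either} MDS \emph{or} near-MDS --- already supplies the required dichotomy, in which case the lemma reduces immediately to negating Lemma \ref{mdscondition}. I would therefore organize the proof so that the only genuinely new labor beyond the cited MDS criterion is securing this AMDS-to-NMDS upgrade.
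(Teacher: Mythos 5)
First, a point of reference: the paper contains no proof of this lemma at all --- it is imported verbatim from \cite[Lemma 2.6]{HY}, exactly as Lemma~\ref{mdscondition} is, so your proposal can only be judged on its own merits rather than against an internal argument. Your framework is the right one: the degree count on the primal is correct (every nonzero $f\in\mathcal{V}_{(k,1,h,\eta)}$ has $\deg f\le k$, since $\eta\ne 0$ and its top term is $\eta a_h x^k$ when $a_h\ne0$, so nonzero codewords have weight at least $n-k$, forcing $d\in\{n-k,\,n-k+1\}$), and combining this dichotomy on $d$ with Lemma~\ref{mdscondition} correctly identifies the condition $\eta\sum_{i\in I}\alpha_i=-1$ for some $k$-subset $I$ with the statement that $\mathcal{C}$ is AMDS.

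The genuine gap is in your AMDS-to-NMDS upgrade, and it is a logical one, not merely a missing computation. Your written chain is: $d^{\perp}\le k+1$ by Singleton; $d^{\perp}=k+1$ would make $\mathcal{C}^{\perp}$ MDS, hence $\mathcal{C}$ MDS, contradiction; ``thus $d^{\perp}=k$.'' Excluding $d^{\perp}=k+1$ only yields $d^{\perp}\le k$; nothing you have written rules out $d^{\perp}\le k-1$, which would leave $\mathcal{C}$ AMDS but \emph{not} NMDS, and this is precisely the dividing line the lemma is about. The missing lower bound $d^{\perp}\ge k$ requires the explicit dual structure: e.g.\ the parity-check matrix of \cite[Theorem 2.4]{HY}, reproduced in Section 5 of the paper, exhibits $\mathcal{C}^{\perp}$ as $u_i$-weighted evaluations of twisted polynomials of degree at most $n-k$, whence nonzero dual codewords have at most $n-k$ zeros and weight at least $k$. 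But that description is quoted only under the hypotheses $\lambda\neq 0$ and $\eta\neq-\lambda^{-1}$ (one needs $1+\lambda\eta\neq0$ for the coefficient $-\frac{\eta}{1+\lambda\eta}$ to exist), and the degenerate case $1+\lambda\eta=0$ is perfectly compatible with the non-MDS regime, since the MDS criterion constrains sums over $k$-subsets while $\lambda$ is the sum over all $n>k$ points; so that case demands separate treatment. You flag this obstacle yourself, which is to your credit, but flagging it does not close it. Your fallback --- invoke the known ``MDS or NMDS'' dichotomy for single-twist TGRS codes and simply negate Lemma~\ref{mdscondition} --- is logically sound, but observe what it costs: that dichotomy \emph{is} the nontrivial content of the cited result (its proof in \cite{HY} is exactly the dual analysis you are deferring), so in that form your argument reduces to re-citing the statement to be proved, placing it on the same footing as the paper's own citation rather than constituting an independent proof. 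If you want a self-contained argument, the bound $d^{\perp}\ge k$ in all parameter cases is the one piece of genuinely new labor, exactly as you suspected.
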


The following theorem follows a similar concept as Theorem \ref{MDSGRSisodd} and provides a decoding algorithm for NMDS TGRS codes.

\begin{Theorem}\label{NMDSGRS}
Let $\mathcal{C}_{k,1,h}(\alpha,\mathbf{1},\eta)$ be defined as above.
Then there exists a decoding algorithm for the $k$-dimensional NMDS TGRS code $\mathcal{C}_{k,1,h}(\alpha,\mathbf{1},\eta)$ of length $n$,
which can correct  up to $\lfloor \frac{n-k-1}{2}\rfloor$ errors and executes in a number of operations that is polynomial in $n$.
\end{Theorem}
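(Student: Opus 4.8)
The plan is to mirror the rational-interpolation (Welch--Berlekamp style) argument of Theorem \ref{MDSGRSisodd}, adjusting the degree budget to the NMDS distance. By Lemma \ref{nmdscondition} the code is AMDS, so its minimum distance is $d=n-k$ and the guaranteed unique-decoding radius is $\tau=\lfloor\frac{n-k-1}{2}\rfloor$. Writing the received word as $(y_1,\dots,y_n)=(f(\alpha_1),\dots,f(\alpha_n))+\mathbf{e}$ with $f(x)\in\mathcal{V}_{(k,1,h,\eta)}$ (so that $\deg f\le k$) and $\mathbf{e}$ of weight at most $\tau$, I would search for a pair $(g,h)$ of unknown polynomials with $\deg h\le\lfloor\frac{n-k-1}{2}\rfloor$ and $\deg g\le k+\lceil\frac{n-k-1}{2}\rceil$ satisfying the homogeneous linear constraints $g(\alpha_j)-h(\alpha_j)y_j=0$ for $j=1,\dots,n$.

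The number of unknown coefficients is $(\lfloor\frac{n-k-1}{2}\rfloor+1)+(k+\lceil\frac{n-k-1}{2}\rceil+1)=n+1$, using $\lfloor\frac{n-k-1}{2}\rfloor+\lceil\frac{n-k-1}{2}\rceil=n-k-1$; since this exceeds the $n$ equations, Gaussian elimination produces a nontrivial kernel vector in $O(n^3)$ operations. I would then rule out $h=0$ exactly as in the MDS case: if $h\equiv 0$, the constraints force $g$ to vanish at the $n$ distinct points $\alpha_j$, while $\deg g\le k+\lceil\frac{n-k-1}{2}\rceil<n$, so $g\equiv 0$, contradicting nontriviality. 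Hence every nontrivial solution has $h\neq 0$.

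Setting $J=\{j\mid y_j=f(\alpha_j)\}$, the weight bound on $\mathbf{e}$ gives $|J|\ge n-\tau$, and on $J$ the auxiliary polynomial $\varphi(x)=g(x)-h(x)f(x)$ vanishes, so $\varphi$ has at least $n-\tau$ distinct roots in $\mathbb{F}_q$. The crux is the degree comparison: since $\deg\varphi\le\max\{\deg g,\,\deg h+k\}=k+\lceil\frac{n-k-1}{2}\rceil=n-\lfloor\frac{n-k-1}{2}\rfloor-1$, I obtain $\deg\varphi<n-\tau\le|J|$, forcing $\varphi\equiv 0$. Consequently $h\mid g$ with quotient the sought twisted polynomial $f=g/h$, and evaluating $f$ at $\alpha_1,\dots,\alpha_n$ returns the transmitted codeword; note that since every nontrivial solution yields the same quotient $f$, the recovered codeword is well-defined regardless of which kernel vector is chosen.

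The main obstacle I anticipate is the degree bookkeeping under the reduced NMDS distance $d=n-k$: compared with the MDS setting there is one less unit of slack, so I must confirm that the master identity $k+\lceil\frac{n-k-1}{2}\rceil+\lfloor\frac{n-k-1}{2}\rfloor=n-1$ simultaneously keeps $\deg\varphi$ strictly below $|J|$ and keeps the unknown count equal to $n+1>n$. Verifying this cleanly for both parities of $n-k$ through the floor/ceiling identity is the step requiring care, whereas the linear-algebraic existence argument and the root-versus-degree contradiction are routine once the degrees are fixed.
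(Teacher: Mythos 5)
Your proposal is correct and follows essentially the same Welch--Berlekamp-style argument as the paper's proof: the identical degree budget $\deg h\le\lfloor\frac{n-k-1}{2}\rfloor$, $\deg g\le k+\lceil\frac{n-k-1}{2}\rceil$, the same $n+1$-unknown homogeneous system solved by Gaussian elimination, and the same root-count-versus-degree contradiction forcing $g-hf\equiv 0$. Your two additions --- explicitly ruling out $h\equiv 0$ and noting that every nontrivial kernel vector yields the same quotient $f$ --- are minor refinements the paper leaves implicit, not a different route.
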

\begin{proof}
Suppose that we have received the vector $(y_1,y_2,\cdots,y_n)$. We want to find the $(k,1,h,\eta)$-twisted polynomial $f(x)\in \mathcal{V}_{(k,1,h,\eta)}$ with degree at most $k$
such that
\begin{equation}\label{error vector41}
(y_1,y_2,\cdots,y_n)=(f(\alpha_1),f(\alpha_2),\cdots,f(\alpha_n))+\mathbf{e},
\end{equation}
where $\mathbf{e}$ is the error vector with weight at most $\lfloor \frac{n-k-1}{2}\rfloor$.
Observe that  $\mathcal{C}_{k,1,h}(\alpha,\mathbf{1},\eta)$ is an NMDS code,
we have
$$\frac{n-k-1}{2}=\frac{d-1}{2}.$$

Let $h(x)$ be an arbitrary polynomial of degree at most $\lfloor\frac{n-k-1}{2}\rfloor$ and let $g(x)$ be an arbitrary polynomial of degree at most $k+\lceil\frac{n-k-1}{2}\rceil$.
We determine the coefficients of $g(x)$ and $h(x)$ by solving the system of $n$ equations:
$$g(\alpha_j)-h(\alpha_j)y_j=0,\mbox{for} ~j=1,2,\cdots,n.$$
This homogeneous linear system has
$$\lfloor\frac{n-k-1}{2}\rfloor+1+k+\lceil\frac{n-k-1}{2}\rceil+1=n+1$$
unknowns (the coefficients of $g(x)$ and $h(x)$) and $n$ equations. Hence, we can find a non-trival solution for $h(x)$ and
$g(x)$) in a number of operations that is polynomial in $n$ using Guassian elimination.

By assumption (\ref{error vector41}), there is a polynomial $f(x)$ of degree at most $k$
such that $y_j=f(\alpha_j)$ for at least $n-\lfloor\frac{n-k-1}{2}\rfloor$ values of $j$.
Thus we let $J=\{j\,|\,y_j=f(\alpha_j), 1\leq j\leq n\}$ and then the number $|J|$ of elements in the set $J$  has the low bound as follows:
$$|J|\geq n-\lfloor\frac{n-k-1}{2}\rfloor.$$

On the other hand, we have
$$g(\alpha_j)-h(\alpha_j)f(\alpha_j)=g(\alpha_j)-h(\alpha_j)y_j=0~
 \hbox{for any}~j\in J.$$
That is to say, for these values of $j\in J$, $\alpha_j$ is a zero of $g(x)-f(x)h(x)$.
Hence, this polynomial $g(x)-f(x)h(x)$ has at least $|J|\geq n-\lfloor\frac{n-k-1}{2}\rfloor$ distinct zeros in $\mathbb{F}_q$.

Now we consider the degree of the polynomial $\varphi(x)=g(x)-h(x)f(x)$. Suppose that $\varphi(x)\neq 0$. It follows that
\begin{eqnarray*}
\deg(\varphi(x))&=& \max\big\{\deg(g(x)), \deg(h(x)f(x))\big\}\\
&\leq & \max\big\{k+\lceil\frac{n-k-1}{2}\rceil, \lfloor\frac{n-k-1}{2}\rfloor+k\big\}\\
&\leq & \max\big\{k+(n-k-1)-\lfloor\frac{n-k-1}{2}\rfloor, \lfloor\frac{n-k-1}{2}\rfloor+k\big\}\\
&\leq & \max\big\{n-1-\lfloor\frac{n-k-1}{2}\rfloor, \lfloor\frac{n-k-1}{2}\rfloor+k\big\}\\
&= & n-1-\lfloor\frac{n-k-1}{2}\rfloor=k+\lceil\frac{n-k-1}{2}\rceil.
\end{eqnarray*}
Given that a polynomial cannot have more roots than its degree, we conclude that
$$n-\lfloor\frac{n-k-1}{2}\rfloor\leq k+\lceil\frac{n-k-1}{2}\rceil.$$
However, we find that
\begin{eqnarray*}
k+\lceil\frac{n-k-1}{2}\rceil& = &k+(n-k-1-\lfloor\frac{n-k-1}{2}\rfloor)\\
& = & n-1-\lfloor\frac{n-k-1}{2}\rfloor \\
&<&n-\lfloor\frac{n-k-1}{2}\rfloor.
\end{eqnarray*}
This gets a contradiction, implying $\varphi(x)$ is identically zero. Therefore, $h(x)$ divides $g(x)$ and the quotient is $f(x)$.
\end{proof}

In the following, we give an example to illustrate the above result.

\begin{Example}{\rm
Suppose that $n=q=7,k=2$ and $\eta=2\in \mathbb{F}_7$. Then $\mathbb{F}_7=\{\alpha_1,\alpha_2,\alpha_3,\alpha_4,\alpha_5,\alpha_6,\alpha_7\}$,
where $\alpha_i=i-1, i=1,2,\cdots,7$.
Since $\eta(\alpha_1+\alpha_4)=2(0+3)=6=-1$, from Lemma \ref{nmdscondition} we get that the TGRS code $\mathcal{C}_{2,1,0}(\alpha,\mathbf{1},2)$ is NMDS,
where
$$
\mathcal{C}_{2,1,0}(\alpha,\mathbf{1},2)=
\big\{(f(0),f(1),f(2),f(3),f(4),f(5),f(6))\,\big|\,f(x)=a_0+a_1x+2a_0x^2
 ~\hbox{for any}~a_0,a_1\in \mathbb{F}_7\big\}.
$$
Suppose that we have sent a codeword $\mathbf{u}$ of the $2$-dimensional NMDS GRS code $\mathcal{C}_{2,1,0}(\alpha,\mathbf{1},2)$ over $\mathbb{F}_7=\{0,1,2,3,4,5,6\}$
(order  the elements of $\mathbb{F}_7$ in this order) and that
we have received
$$\mathbf{y}=(y_1,y_2,y_3,y_4,y_5,y_6,y_7)=(1,1,0,0,3,3,0).$$
According to the algorithm in the proof of Theorem \ref{NMDSGRS},
we should find a polynomial $g(x)$ of degree at most $4$ and a polynomial $h(x)$ of degree at most $2$, such that
$$g(\alpha_j)=h(\alpha_j)y_j $$
for $j=0,1,2,3,4,5,6$,  where $\alpha_j$ is the $j$-th element of $\mathbb{F}_7$.

The equations are
$$g(0)=h(0), g(1)=h(1), g(2)=g(3)=g(6)=0, g(4)=3h(4), g(5)=3h(5).$$

From this we deduce that
$$g(x)=(x-2)(x-3)(x-6)(g_1x+g_0)$$
and
$$h(x)=h_2x^2+h_1x+h_0,$$
for some $h_2, h_1, h_0, g_1, g_0\in \mathbb{F}_7$, which are solutions of the system
$$
\begin{cases}
g(0)=h(0),\\
g(1)=h(1),\\
g(4)=3h(4),\\
g(5)=3h(5),
\end{cases}
$$
i.e.,
$$
\begin{cases}
-g_0=h_0,\\
-3(g_1+g_0)=h_2+h_1+h_0,\\
-2g_1-4g_0=-h_2-2h_1+3h_0,\\
5g_1+g_0=-2h_2+h_1+3h_0,
\end{cases}
$$
which implies that all the solutions of the system of equations are
$$
\begin{cases}
g_0=-2h_2,\\
g_1=2h_2,\\
h_0=2h_2,\\
h_1=4h_2,
\end{cases}
$$
where $h_2\in \mathbb{F}_7$.
Thus we have
$$
\begin{cases}
g(x)=(x-2)(x-3)(x-6)[2h_2x-2h_2],\\
h(x)=h_2x^2+4h_2x+2h_2.
\end{cases}
$$
We can readily check that $h(1)=0$ and then we obtain
\begin{eqnarray*}
f(x)=\frac{g(x)}{h(x)}&=&\frac{(x-2)(x-3)(x-6)(2x-2)}{(x-1)(x-2)}\\
&=&2(x-3)(x-6)=1+3x+2x^2\\
&=&a_0+a_1x+2a_0x^2.
\end{eqnarray*}
Evaluating the polynomial $f(x)$, we deduce that
$$\mathbf{u}=\big(f(0),f(1),f(2),f(3),f(4),f(5),f(6)\big)=(1,6,1,0,3,3,0).$$
}
\end{Example}

\section{Comparison with the known results}
Recall that the TGRS code $\mathcal{C}_{k,1,h}(\alpha,\mathbf{1},\eta)$ is
defined as
$$\mathcal{C}_{k,1,h}(\alpha,\mathbf{1},\eta)=
\Big\{\big(f(\alpha_1),f(\alpha_2),\cdots,f(\alpha_n)\big)\,\Big|\,f(x)\in \mathcal{V}_{(k,1,h,\eta)}\Big\},$$
where  $f(x)\in \mathcal{V}_{(k,1,h,\eta)}$ means that $f(x)$ is of the form
$$f(x)=\sum_{i=0}^{k-1}a_ix^i+\eta a_hx^{k}.$$
According to Lemma \ref{mdscondition}, the TGRS code $\mathcal{C}_{k,1,h}(\alpha,\mathbf{1},\eta)$ is MDS if and only if
$$
\eta\sum_{i\in I}\alpha_i\neq -1~\hbox{for any}~I\subset\{1,2,\cdots,n\}
~\hbox{with}~|I|=k.
$$
Assume that for $i=1,2,\cdots,n$,
$$u_i=\prod_{j=1,j\neq i}^n(\alpha_i-\alpha_j)
~\hbox{and}~\lambda=\sum_{i=1}^n\alpha_i.$$

In \cite{SYJL}, Sun et al. presented the key equation for decoding MDS TGRS codes  along with the corresponding decoding algorithm. The so-called key equation seeks to solve the following relationship:
$$S(x)\sigma(x)\equiv \tau(x)~(\textrm{mod} ~G(x))$$
where  $S(x)$ and $G(x)$ are given polynomials,
the degree of $\sigma(x)$ matches the number of errors,
and the degree of $\tau(x)$ is less than or equal to that of $\sigma(x)$.
Specifically, Sun et al. in \cite{SYJL} focused on two types of
 $[n,k,n-k+1]$ MDS TGRS codes
based on the extended Euclidean algorithm, particularly for cases when $n-k$ is even. Their respective parity-check matrices are given by:
$$H_1=
\begin{pmatrix}
v_1(1+\eta\alpha_1^{n-k}) & v_2(1+\eta\alpha_2^{n-k}) &\cdots & v_n(1+\eta\alpha_n^{n-k})\\
v_1\alpha_1 & v_2\alpha_2 & \cdots & v_n\alpha_n\\
\vdots & \vdots & \ddots & \vdots \\
v_1\alpha_1^{n-k-2} & v_2\alpha_2^{n-k-2} & \cdots & v_n\alpha_n^{n-k-2}\\
v_1\alpha_1^{n-k-1} & v_2\alpha_2^{n-k-1} & \cdots & v_n\alpha_n^{n-k-1}
\end{pmatrix}
$$
and
$$H_2=
\begin{pmatrix}
v_1 & v_2 &\cdots & v_n\\
v_1\alpha_1 & v_2\alpha_2 & \cdots & v_n\alpha_n\\
\vdots & \vdots & \ddots & \vdots \\
v_1\alpha_1^{n-k-2} & v_2\alpha_2^{n-k-2} & \cdots & v_n\alpha_n^{n-k-2}\\
v_1(\alpha_1^{n-k-1}+\eta\alpha_1^{n-k}) & v_2(\alpha_2^{n-k-1}+\eta\alpha_1^{n-k}) & \cdots & v_n(\alpha_n^{n-k-1}+\eta\alpha_1^{n-k})
\end{pmatrix}.
$$
These types of MDS TGRS codes are characterized
by generator matrices of the forms:
$$G_1=
\begin{pmatrix}
w_1 & w_2 &\cdots & w_n\\
w_1\alpha_1 & w_2\alpha_2 & \cdots & w_n\alpha_n\\
\vdots & \vdots & \ddots & \vdots \\
w_1\alpha_1^{k-2} & w_2\alpha_2^{k-2} & \cdots & w_n\alpha_n^{k-2}\\
w_1(\alpha_1^{k-1}+\mu_1\alpha_1^{-1}) & w_2(\alpha_2^{k-1}+\mu_1\alpha_1^{-1}) & \cdots & w_n(\alpha_n^{k-1}+\mu_1\alpha_1^{-1})
\end{pmatrix}
$$
and
$$G_2=
\begin{pmatrix}
w_1 & w_2 &\cdots & w_n\\
w_1\alpha_1 & w_2\alpha_2 & \cdots & w_n\alpha_n\\
\vdots & \vdots & \ddots& \vdots \\
w_1\alpha_1^{k-2} & w_2\alpha_2^{k-2} & \cdots & w_n\alpha_n^{k-2}\\
w_1(\mu_2\alpha_1^{k-1}+\alpha_1^{k}) & w_2(\mu_2\alpha_2^{k-1}+\alpha_1^{k}) & \cdots & w_n(\mu_2\alpha_n^{k-1}+\alpha_1^{k})
\end{pmatrix},
$$
where
$$\mu_1=-\frac{\eta\sum_{i=1}^nu_i\alpha_i^{n-1}
+\sum_{i=1}^nu_i\alpha_i^{k-1}}{\sum_{i=1}^nu_i\alpha_i^{-1}},~
\mu_2=-\frac{\sum_{i=1}^nu_i\alpha_i^{n-1}+
\eta\sum_{i=1}^nu_i\alpha_i^{n}}{\eta\sum_{i=1}^nu_i\alpha_i^{n-1}},~
w_i=\frac{u_i}{v_i},
$$
and $n-k$ is even.

By comparing our results with those in \cite{SYJL}, we note the following significant aspects:
\begin{itemize}
\item[{\rm(1)}]
 We have presented a decoding algorithm for MDS
 TGRS codes over finite fields, featuring parameters
 of code lengths and dimensions that are more flexible than those provided in \cite{SYJL}. Notably, \cite{SYJL} focuses solely on the case when $n-k$ is even, whereas we have addressed both cases of $n-k$ being odd and even, as shown in Theorems \ref{MDSGRSisodd} and \ref{MDSGRSiseven} respectively. Furthermore, the hook $h$ of each MDS TGRS code in this paper is arbitrary, differing from the restricted cases when $h = 0$ or $k-1$ in \cite{SYJL}.

\item[{\rm(2)}]
Our findings also encapsulate certain situations detailed in \cite{SYJL}.
 For instance, if we let
  $\lambda\neq 0$ and $\eta\neq -\lambda^{-1}$,
   then according to \cite[Theorem 2.4]{HY}, the parity-check matrix of the MDS TGRS code $\mathcal{C}_{k,1,h}(\alpha,\mathbf{1},\eta)$
    is represented as:
$$
\begin{pmatrix}
u_1 & u_2 & \cdots & u_n\\
u_1\alpha_1 & u_2\alpha_2 & \cdots & u_n\alpha_n\\
\vdots & \vdots & \ddots& \vdots\\
u_1\alpha_1^{n-k-2} & u_2\alpha_2^{n-k-2} & \cdots & u_n\alpha_n^{n-k-2}\\
u_1(\alpha_1^{n-k-1}-\frac{\eta}{1+\lambda\eta}\alpha_1^{n-k}) & u_2(\alpha_2^{n-k-1}-\frac{\eta}{1+\lambda\eta}\alpha_2^{n-k})  & \cdots & u_n(\alpha_n^{n-k-1}-\frac{\eta}{1+\lambda\eta}\alpha_n^{n-k})
\end{pmatrix}.
$$
Given that
$$
\eta\sum_{i\in I}\alpha_i\neq -1~\hbox{for any} ~I\subset\{1,2,\cdots,n\}
~\hbox{with}~ |I|=k,
$$
it follows that
 $-\frac{\eta}{1+\lambda\eta}$ satisfies
$$
\frac{\eta}{1+\lambda\eta}\sum_{i\in J}\alpha_i\neq 1~\hbox{for any}~ ~J\subset\{1,2,\cdots,n\}~\hbox{with}~|J|=n-k.
$$
Thus, in this case, the MDS TGRS code $\mathcal{C}_{k,1,h}(\alpha,\mathbf{1},\eta)$ coincides with the TGRS code $\mathcal{C}_2$ of \cite{SYJL}.

\item[{\rm(3)}]  The decoding algorithm provided here is not only  applicable to TGRS codes, but also to twisted Goppa codes,
which was introduced in \cite{SY}.
Since the twisted Goppa codes are the subfield subcode of   TGRS codes, there exists a polynomial time decoding algorithm which
corrects the errors, whose proof is similar to that of
Theorem \ref{MDSGRSisodd} or \ref{MDSGRSiseven}.
This kind of decoding algorithm of twisted Goppa codes is also different from   the ones provided in \cite{SY} and \cite{SYJL}.
In \cite{SY}, Sui and Yue introduced twisted Goppa codes
as subfield subcodes of twisted Reed-Solomon (TRS)
codes and developed an efficient decoding algorithm based
on the extended Euclidean algorithm.
They also explored quasi-cyclic and cyclic structures of these codes to reduce public key sizes in the Niederreiter cryptosystem.
However, their decoding algorithm was limited to correcting up to $\lfloor\frac{n-k-1}{2}\rfloor$ errors when
the Goppa polynomial has even degree $n-k$,
falling short of the theoretical maximum $\lfloor\frac{n-k+1}{2}\rfloor$ errors.

\item[{\rm(4)}]  In all the algorithms outlined in Theorems \ref{MDSGRSisodd}, \ref{MDSGRSiseven}, and \ref{NMDSGRS},
we rely solely on Gaussian elimination to find the
twisted polynomial and subsequently derive the corrected codewords.
As a result, the time complexity of each algorithm is $O(n^3)$.
In contrast, the algorithms in \cite{SYJL},
based on Euclidean algorithm, exhibit a time complexity of $O(qn)$.
Thus, in certain scenarios, the algorithms provided in
this paper can demonstrate greater efficiency.
\end{itemize}
\section{Conclusion and future work}

In this paper, we have introduced a new decoding algorithm for
MDS TGRS codes with parameters $[n,k,n-k+1]$,
which relies exclusively on  \textbf{Gaussian elimination}.
The algorithm successfully corrects up to:
\begin{itemize}
  \item $\displaystyle \Bigl\lfloor\frac{n-k}{2}\Bigr\rfloor$ errors when $n-k$ is odd, and
  \item $\displaystyle \Bigl\lfloor\frac{n-k}{2}\Bigr\rfloor-1$ errors when $n-k$ is even.
\end{itemize}
Moreover, it operates with a time complexity of
 $O(n^{3})$ operations.
Our method distinguishes itself from previous approaches
based on Euclidean algorithm by eliminating the need for polynomial divisions and the extended GCD step. Notably, it also addresses the previously unexplored odd-$(n-k)$ case highlighted in the literature~\cite{SYJL}.
Additionally, we demonstrated that this framework
is applicable to NMDS TGRS codes with parameters $[n,k,n-k]$,
thereby allowing for the correction of
$\lfloor\frac{n-k-1}{2}\rfloor$
errors while maintaining polynomial-time complexity.

Possible
future work will focus on leveraging sparsity or banded structures to enhance practical running time, as well as extending the method to higher-order
TGRS codes and other algebraic-geometric codes to
evaluate its effectiveness and robustness across a
 broader range of parameters.




\end{document}